\newtheorem{remark}{Remark}
\newtheorem{theorem}{Theorem}
\newtheorem{proposition}{Proposition}
\newtheorem{definition}{Definition}
\def\r{\mathbb{R}}
\begin{document}

\runningheads{A minimal HIV-AIDS infection model with general incidence rate}{%
E. M. Lotfi, M. Mahrouf, M. Maziane, C. J. Silva, D. F. M. Torres and N. Yousfi}

\title{A minimal HIV-AIDS infection model with general incidence rate\\
and application to Morocco data}

\author{El Mehdi Lotfi\affil{1},
Marouane Mahrouf\affil{1},
Mehdi Maziane\affil{1},
Cristiana J. Silva\affil{2}\corrauth,
Delfim F. M. Torres\affil{2},
Noura Yousfi\affil{1}}

\address{
\affilnum{1}Department of Mathematics and Computer Science,
Faculty of Sciences Ben M'sik, Hassan II University,
P.O Box 7955 Sidi Othman, Casablanca, Morocco.
\affilnum{2}Center for Research and Development in Mathematics and Applications (CIDMA),
Department of Mathematics, University of Aveiro, 3810-193 Aveiro, Portugal.}

\corraddr{Cristiana J. Silva (Email: cjoaosilva@ua.pt).
Department of Mathematics, University of Aveiro, 3810-193 Aveiro, Portugal.}


\begin{abstract}
We study the global dynamics of a SICA infection model with general
incidence rate. The proposed model is calibrated with cumulative cases
of infection by HIV--AIDS in Morocco from 1986 to 2015. We first prove
that our model is biologically and mathematically well-posed.
Stability analysis of different steady states is performed
and threshold parameters are identified where the model exhibits
clearance of infection or maintenance of a chronic infection.
Furthermore, we examine the robustness of the model to some parameter
values by examining the sensitivity of the basic reproduction number.
Finally, using numerical simulations with real data from Morocco,
we show that the model predicts well such reality.
\end{abstract}

\keywords{SICA compartmental model, general incidence function,
global stability, Lyapunov functionals.}

\maketitle

\noindent{\bf AMS 2010 subject classifications} 34D23, 92D30.


\section{Introduction}

Human immunodeficiency virus or HIV, is a type of virus that can cause
a disease called acquired immunodeficiency syndrome, or AIDS. HIV infection
affects the immune system, which is body's natural defence against diseases.
If left untreated, serious illness can occur. Normally innocuous infections,
such as influenza or bronchitis, can then get worse, become very difficult
to treat, or even cause death. In addition, the risk of cancer is also increased.
For this reason, the World Health Organization is committed
to end the AIDS epidemic by 2030 \cite{UNAIDS}.

According to the latest statistics on the state of AIDS epidemic by UNAIDS \cite{UNAIDS},
36.9 million people, globally, were living with HIV in 2017, of which 21.7 million
individuals were accessing ART (antiretroviral therapy) treatment; and 1.8 million became
newly infected with HIV in 2017. A total of 77.3 million individuals have become
infected with HIV since the start of the epidemic in 1981. Figures of death indicate
that 940,000 people died of AIDS-related illnesses in 2017, with a total
of 35.4 million people that have died from AIDS-related
illnesses since the start of the epidemic.

The risk of HIV infection depends on the mode of transmission, the prevalence
of HIV and the behaviour of the population. Human populations exhibit highly
variable behaviours in terms of sex and injecting drug use, a heterogeneity
of risk conceptualized by convention across three population groups
(see Figure~\ref{populationgroup}). The key group is those most exposed
to HIV infection, generally injecting drug users, men who have sex
with men (including sex workers) and sex workers. The second group
is the gateway population (e.g., truck drivers and clients of sex workers)
who pose a risk of intermediate exposure and puts the high-risk group
in contact with the third low-risk group, general population,
which brings together the bulk of the population of all communities.
\begin{figure}[!ht]
\includegraphics[scale=0.7]{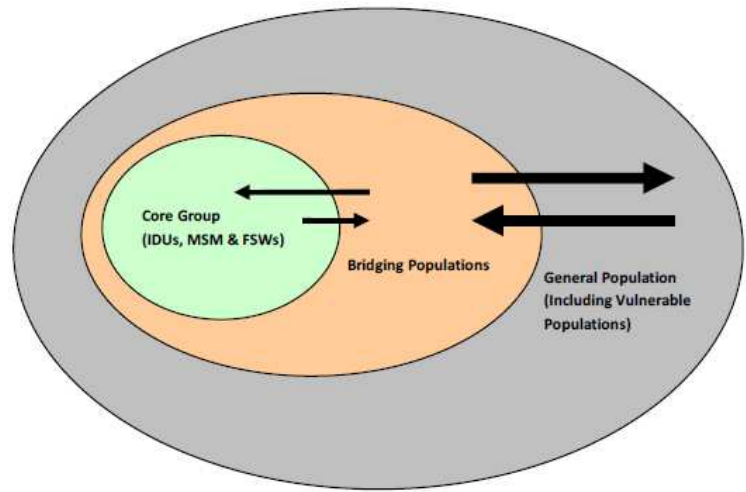}
\caption{Heterogeneity of the exposed risk to a sexually
transmitted infection in a given population.}
\label{populationgroup}
\end{figure}

Morocco, officially the Kingdom of Morocco, is a unitary sovereign
state located in North-western Africa that has a population of
34 996 167. The political regime of Morocco is a constitutional monarchy.
Moroccan culture is a blend of Berber, Arab, West African and European influences.
Its geographical proximity to Europe and its diverse cultures make it among
the most visited destinations with a ranking of 33rd out of 227 countries
\cite{populationdata}. In 2016, Morocco had more than 1000 new HIV infections
and almost 1000 AIDS-related deaths. There were 22 000 people living with HIV
in 2016, among whom 48\% were accessing ART treatment. In 2018, Morocco
began its thirtieth year of response to HIV, with great satisfaction.
Concerted efforts in the country have resulted in a 42\%
decline in new HIV infections since 2010, well above the only
4\% decrease in the Middle East and North Africa.
Coverage of HIV treatment has also increased in the country,
from 16\% in 2010 to 48\% in 2016 \cite{UNAIDS1}.

In our study, we assume that HIV infected individuals, with and without symptoms,
have access to ART treatment. The part of HIV infected individuals without AIDS
symptoms move to the \emph{chronic stage} (HIV-infected individuals
under ART treatment with a low viral load) \cite{MR3392642}.
In this class, when they respect carefully the ART treatment
and do not have a risky behaviour for HIV transmission,
individuals have the same life expectancy as uninfected ones
and the risk of HIV transmission is greatly reduced \cite{MyID:406,MR3392642,SICA}.
If the infected individuals do not take ART treatment, then
they can transmit the infection and the symptoms will start to appear.
We consider that these individuals are aware of their state of health,
so they take precautions not to infect others, and/or are too sick
to have a risky behaviour. Mathematically, this translates to the
fact that function $f$ of model \eqref{1} depends only on $S$ and $I$.
These reductionist assumptions have been
made to reduce the epidemic problem to a mathematical tractable form.
In order to contribute to the project of ending this epidemic by 2030,
we need to predict the evolution of its behaviour, which is a highly challenging problem.
One of the main tools to reach our goal is HIV infection mathematical modelling.
For this, based on \cite{SICA}, we propose the following model:
\begin{equation}
\label{1}
\begin{cases}
\dot{S}(t) = \Lambda  - \mu S(t)- f\left(S(t),I(t)\right)I(t),\\[0.2 cm]
\dot{I}(t) = f\left(S(t),I(t)\right)I(t) - (\rho + \phi + \mu)I(t)
+ \alpha A(t)  + \omega C(t), \\[0.2 cm]
\dot{C}(t) = \phi I(t) - (\omega + \mu)C(t),\\[0.2 cm]
\dot{A}(t) =  \rho \, I(t) - (\alpha + \mu + d) A(t),
\end{cases}
\end{equation}
with initial conditions
\begin{equation}
\label{IC}
S(0)=S_0\geq0, \
I(0)=I_0\geq0, \
C(0)=C_0\geq0, \
A(0)=A_0\geq 0.
\end{equation}
The model considers a varying population size in a homogeneously mixing population
and subdivides the human into four mutually-exclusive compartments:
\begin{itemize}
\item[-] Susceptible individuals ($S$);

\item[-] HIV infected individuals with no clinical symptoms of AIDS
(the virus is living or developing in the host
but without producing symptoms or only mild ones),
but that are able to transmit HIV to other individuals ($I$);

\item[-] HIV infected individuals under ART treatment
with a viral load remaining low ($C$);

\item[-] HIV infected individuals with AIDS clinical symptoms ($A$).
\end{itemize}
The susceptible population is increased by recruitment of susceptible
individuals at a rate $\Lambda$, while $\mu$ is the natural death
rate of all individuals. The transmission process is governed by
a general incidence function $f(S,I)$ \cite{MR3815138}.
Individuals in the class $I$ progress to the class $C$
at a rate $\phi$ and to the class $A$ at a rate $\rho$.
Individuals in the class $A$, that start taking ART treatment,
progress to the class $C$ at rate $\alpha$. Individuals in the class $C$
return to the class $I$, if the treatment is not maintained, at a rate $\omega$.
Individuals suffer from AIDS induced death at a rate $d$.
The total population at time $t$, denoted by $N(t)$, is given by
\begin{equation*}
N(t) = S(t) + I(t) + C(t) + A(t).
\end{equation*}
As in \cite{Hattaf1,HattafC,HattafD1,HattafD2,Lotfi1}, the incidence
function $f(S,I)$ is assumed to be non-negative and continuously differentiable
in the interior of $\r^{2}_{+}$ and satisfies the following
hypotheses:
\begin{gather}
\label{H1}\tag{$H_{1}$} f(0,I)=0,
\hspace*{0.2 cm} \text{ for all } I \geq 0,\\
\label{H2}\tag{$H_{2}$} \frac{ \partial f}{\partial S}(S,I)> 0,
\hspace*{0.2 cm} \text{ for all } S>0\ \text{ and } \ I \geq 0,\\
\label{H3}\tag{$H_{3}$} \frac{ \partial f}{\partial I}(S,I)
\leq 0 \hspace*{0.2 cm},\hspace*{0.2 cm} \text{ for all }\ S\geq 0\
\text{ and } \ I \geq 0.
\end{gather}
The reason for adopting hypothesis \eqref{H3} is the fact that susceptible individuals
take measures to reduce contagion if the epidemics breaks out.
This idea has first been explored in \cite{MR0529097}.

The rest of the paper is structured as follows. The mathematical analysis
of the model is presented in Section~\ref{sec:02}. Then we establish the
stability of the equilibria in Section~\ref{sec:03}. A sensitivity
analysis of the basic reproduction number is given in Section~\ref{sec:04}.
Finally, in Section~\ref{sec:05} an application of the model to HIV/AIDS
infection data from the period of 1986--2015 in Morocco is provided,
and some conclusions are derived in Section~\ref{sec:06}.


\section{Basic results}
\label{sec:02}

In this section, we prove that all solutions with nonnegative
initial data will remain nonnegative and bounded for all time.

\begin{theorem}
All solutions of \eqref{1} starting from non-negative initial conditions
\eqref{IC} exist for all $t>0$ and remain bounded and non-negative.
Moreover,
$$
N(t)\leq N(0)+\dfrac{\Lambda}{\mu}.
$$
\end{theorem}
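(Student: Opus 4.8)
The plan is to establish the four assertions in the natural order --- local existence and uniqueness, non-negativity, boundedness, and global existence --- the last being a consequence of the a priori bound obtained in the third step. First I would invoke the Cauchy--Lipschitz (Picard--Lindel\"of) theorem: the right-hand side of \eqref{1} is continuously differentiable on the interior of $\r^4_+$, since $f$ is assumed $C^1$ there, hence locally Lipschitz, so there exists a unique maximal solution $(S,I,C,A)$ on some interval $[0,T_{\max})$.

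For non-negativity I would argue that the vector field points into the non-negative orthant along each of its bounding faces, so that the orthant is positively invariant. Concretely, I would examine each equation on the corresponding coordinate hyperplane: on $S=0$ hypothesis \eqref{H1} gives $f(0,I)=0$, whence $\dot S=\Lambda>0$; on $I=0$ one gets $\dot I=\alpha A+\omega C\ge 0$; on $C=0$, $\dot C=\phi I\ge 0$; and on $A=0$, $\dot A=\rho I\ge 0$. The decisive ingredient is \eqref{H1}: without $f(0,I)=0$ the susceptible class could turn negative. To convert these inward-pointing conditions into a rigorous invariance statement, I would use the integrating-factor representation of each equation in its own variable; for instance $C(t)=C_0 e^{-(\omega+\mu)t}+\int_0^t \phi I(s)\,e^{-(\omega+\mu)(t-s)}\,ds$, and analogously for $A$, $I$ and $S$, which exhibit each component as a sum of non-negative terms provided the driving variables remain non-negative, the coupling being closed by a standard first-crossing (minimal-time) argument.

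The boundedness is the most transparent step. Summing the four equations of \eqref{1}, the incidence terms $\pm f(S,I)I$ cancel and the inter-compartmental transfer terms telescope, leaving
\[
\dot N(t)=\Lambda-\mu N(t)-d\,A(t)\le \Lambda-\mu N(t),
\]
since $A\ge 0$ and $d\ge 0$. Integrating this differential inequality, by comparison with the scalar linear equation $\dot y=\Lambda-\mu y$, yields $N(t)\le \frac{\Lambda}{\mu}+\bigl(N(0)-\frac{\Lambda}{\mu}\bigr)e^{-\mu t}$, and since $e^{-\mu t}\le 1$ one obtains $N(t)\le N(0)+\frac{\Lambda}{\mu}$ regardless of the sign of $N(0)-\frac{\Lambda}{\mu}$. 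Because $0\le S,I,C,A\le N$, every component is bounded on $[0,T_{\max})$.

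Finally, global existence follows at once: a solution of a locally Lipschitz ODE that remains in a fixed compact subset of the domain cannot blow up in finite time, so $T_{\max}=+\infty$. The main obstacle I anticipate is the non-negativity step --- specifically, making the inward-pointing vector field argument fully rigorous in spite of the coupling among $I$, $C$ and $A$, where each variable's positivity feeds the others --- rather than the boundedness, which is immediate once $\dot N$ is computed.
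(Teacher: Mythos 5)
Your proposal is correct and follows essentially the same route as the paper: non-negativity by checking that the vector field points inward on each face of $\r^4_+$ (using \eqref{H1} for the $S=0$ face), and boundedness by summing the four equations to get $\dot N\le\Lambda-\mu N$ and integrating the resulting differential inequality. You are in fact somewhat more careful than the paper, which omits the Picard--Lindel\"of existence step, the first-crossing/integrating-factor argument making the invariance rigorous, and the explicit deduction of global existence from the a priori bound.
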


\begin{proof}
For positivity, we show that any solution starting in the non-negative set
$\r^{4}_{+}=\{(S,I,C,A)\in\r^{4}: S\geqslant0; \ I\geqslant0;
\ C\geqslant0; \ A\geqslant0\}$ remains there forever:
$\big(S(t),I(t),C(t),A(t)\big)\in\r^{4}_{+}$. We have
\begin{eqnarray*}
\dot{S}\mid_{S=0} &=& \Lambda\geq0, \\
\dot{I}\mid_{I=0} &=& \omega C+\alpha A\geq0, \\
\dot{C}\mid_{C=0} &=& \phi I\geq0, \\
\dot{A}\mid_{A=0} &=& \rho I\geq0.
\end{eqnarray*}
Hence, positivity of all solutions initiating in $\r^{4}_{+}$ is guaranteed.
Now, we prove that the solutions are bounded.
Adding the four equations of system \eqref{1}, we get
\begin{eqnarray*}
\dot{N}(t) &=& \Lambda - \mu N(t) -d A(t), \\
\dot{N}(t)&\leq&  \Lambda - \mu N(t),
\end{eqnarray*}
and we deduce that
$$
N(t)\leq N(0)e^{-\mu t}+\dfrac{\Lambda}{\mu}(1-e^{-\mu t}),
$$
since $0\leq e^{-\mu t}\leq1$ and $1-e^{-\mu t}\leq1$. Thus,
$$
N(t)\leq N(0)+\dfrac{\Lambda}{\mu}.
$$
The proof is complete.
\end{proof}

It is easy to see that system \eqref{1} has a disease-free equilibrium
$E_f=(\Lambda/\mu,0,0,0)$. Note that the trivial equilibrium with $(0,0,0,0)$
is not possible by hypothesis \eqref{H1}. Following the approach presented by
van den Driessche and Watmough in \cite{Rzero}, the basic reproduction
number $R_0$ is obtained by calculating the spectral radius of the
next-generation matrix for system \eqref{1}. We get
\begin{equation}
\label{R0}
R_0=\dfrac{f(\Lambda/\mu,0)\xi_2\xi_3}{\mathcal{D}},
\end{equation}
where along all manuscript
\begin{equation}
\label{xis}
\begin{split}
\xi_1&=\rho+\phi+\mu,\\
\xi_2&=\omega+\mu,\\ 
\xi_3&=\alpha+\mu+d,
\end{split}
\end{equation}
and 
\begin{equation}
\label{mathcal:D}
\mathcal{D}=\mu[\xi_2(\xi_3+\rho)+\phi \xi_3+\rho d]+\rho \omega d.
\end{equation}
Biologically, the basic reproduction number represents the average
of new infected individuals produced by a single HIV-infected/AIDS
individual on contact in a completely susceptible population.

\begin{theorem}
(i) If $R_0\leq1$, then the system \eqref{1} has a unique
disease-free equilibrium of the form $E_f=(\Lambda/\mu,0,0,0)$.\\
(ii) If $R_0>1$, then the disease-free equilibrium is still
present and the system \eqref{1} has a unique endemic equilibrium
of the form $E^*(S^*,I^*,C^*,A^*)$, with
$S^*\in\bigg(0,\dfrac{\Lambda}{\mu}\bigg)$, $I^*>0$, $C^*>0$ and $A^*>0$.
\end{theorem}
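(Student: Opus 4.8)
The plan is to separate the two cases by their defining feature: at equilibrium one has either $I=0$ (disease-free) or $I>0$ (endemic), and to reduce the search for an endemic equilibrium to a single scalar equation in $S$ that can be handled with the intermediate value theorem plus a monotonicity argument. Setting the right-hand sides of \eqref{1} to zero, the last two equations immediately give $C=\phi I/\xi_2$ and $A=\rho I/\xi_3$. If $I=0$ these force $C=A=0$, and the first equation yields $S=\Lambda/\mu$, recovering $E_f$; this equilibrium is present for every value of $R_0$, which settles existence of the disease-free equilibrium in both items.

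For $I>0$ I would substitute the expressions for $C$ and $A$ into the second equilibrium equation and divide by $I$, obtaining the threshold relation $f(S,I)=\xi_1-\alpha\rho/\xi_3-\omega\phi/\xi_2$. A short computation shows the right-hand side equals $\mathcal D/(\xi_2\xi_3)$, so the endemic value of the incidence is pinned to the constant $\mathcal D/(\xi_2\xi_3)$. Combining this with the first equilibrium equation $\Lambda-\mu S=f(S,I)I$ expresses $I$ as a decreasing affine function of $S$, namely $I=I(S):=(\Lambda-\mu S)\xi_2\xi_3/\mathcal D$, which is positive precisely when $S\in(0,\Lambda/\mu)$. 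Substituting back, the whole problem reduces to locating a zero of
$$
g(S):=f\bigl(S,I(S)\bigr)-\frac{\mathcal D}{\xi_2\xi_3},\qquad S\in[0,\Lambda/\mu].
$$

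Next I would read off the boundary values of $g$. At $S=0$, hypothesis \eqref{H1} gives $f(0,I(0))=0$, hence $g(0)=-\mathcal D/(\xi_2\xi_3)<0$; at $S=\Lambda/\mu$ we have $I=0$ and $g(\Lambda/\mu)=f(\Lambda/\mu,0)-\mathcal D/(\xi_2\xi_3)$, whose sign is exactly that of $R_0-1$ by \eqref{R0}. For uniqueness I would show $g$ is strictly increasing: since $I'(S)=-\mu\xi_2\xi_3/\mathcal D<0$, the chain rule gives $g'(S)=\partial_S f+\partial_I f\cdot I'(S)$, where \eqref{H2} makes the first term strictly positive while \eqref{H3} combined with $I'(S)<0$ makes the second term nonnegative, so $g'>0$ on $(0,\Lambda/\mu)$. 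Strict monotonicity together with the boundary signs then closes both items: when $R_0>1$ we have $g(0)<0<g(\Lambda/\mu)$, so there is exactly one root $S^*\in(0,\Lambda/\mu)$, from which $I^*,C^*,A^*$ are recovered as positive quantities; when $R_0\le1$ we have $g(\Lambda/\mu)\le0$, so monotonicity forces $g<0$ throughout $(0,\Lambda/\mu)$ and no endemic equilibrium exists, leaving only $E_f$.

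The step I expect to require the most care is the algebraic identity $\xi_1\xi_2\xi_3-\alpha\rho\xi_2-\omega\phi\xi_3=\mathcal D$, since it is what links the endemic threshold $f(S,I)=\mathcal D/(\xi_2\xi_3)$ to the sign of $R_0-1$ and makes the two items fit together; it is routine but error-prone expansion. The monotonicity argument itself is short once one notices the favourable sign alignment between $\partial_I f\le0$ and $I'(S)<0$ — precisely where \eqref{H3} earns its keep — but I would be careful to invoke \eqref{H2} only on the open interval where $S>0$, matching the domain on which that hypothesis is assumed.
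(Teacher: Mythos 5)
Your proposal is correct and follows essentially the same route as the paper: reduce to $C=\phi I/\xi_2$, $A=\rho I/\xi_3$, express $I$ as a decreasing affine function of $S$, and apply the intermediate value theorem with strict monotonicity (your $g$ is exactly the paper's $G$, since $\xi_1-\omega\phi/\xi_2-\alpha\rho/\xi_3=\mathcal{D}/(\xi_2\xi_3)$). If anything, your treatment of the case $R_0\leq 1$ (using $g(\Lambda/\mu)\leq 0$ and monotonicity to exclude an endemic root) is more explicit than the paper's ``Obviously''.
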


\begin{proof}
Obviously, $E_f$ is the unique steady state of system \eqref{1}
when $R_0\leq1$. To find the other equilibrium of system \eqref{1},
we solve the system
\begin{equation}
\label{eq04}
\begin{split}
\Lambda  - \mu S- f\left(S,I\right)I &= 0,\\
f\left(S,I\right)I - \xi_1 I
+ \alpha A  + \omega C &= 0, \\
\phi I - \xi_2C &= 0,\\
\rho \, I - \xi_3 A &= 0,
\end{split}
\end{equation}
where $\xi_i$, $i = 1, 2, 3$, are given as in \eqref{xis}. 
From \eqref{R0} and \eqref{eq04}, we get
$$
C=\dfrac{\phi}{\xi_2}I \qquad\text{  and }\qquad A=\dfrac{\rho }{\xi_3}I.
$$
Adding \eqref{1} and \eqref{IC}, we obtain
$$
\Lambda  - \mu S-\bigg(\xi_1-\dfrac{\omega \phi}{\xi_2}
-\dfrac{\alpha \rho}{\xi_3}\bigg)I=0.
$$
Then,
$$
I=\dfrac{(\Lambda-\mu S)R_0}{f\big(\Lambda/\mu,0\big)}.
$$
Now, we consider the following function defined on the interval
$\bigg[0,\dfrac{\Lambda}{\mu}\bigg]$ by
$$
G(S)=f\bigg(S,\dfrac{(\Lambda-\mu S)R_0}{f\big(\Lambda/\mu,0\big)}\bigg)
-\xi_1+\dfrac{\omega \phi}{\xi_2}+\dfrac{\alpha \rho}{\xi_3}.
$$
By using the intermediate value theorem on the function $G(S)$,
we show that the second equation of system \eqref{eq04} admits a
unique solution. Clearly, $G(0)=-\dfrac{\mathcal{D}}{\xi_2 \xi_3}<0$
and $G(\Lambda/\mu)=\dfrac{\mathcal{D}}{\xi_2 \xi_3}\big[R_0-1\big]$, 
with $\mathcal{D}$ as in \eqref{mathcal:D}, and
$$
G'(S)=\dfrac{\partial f}{\partial S}
-\dfrac{\mu R_0}{f\big(\Lambda/\mu,0\big)}\dfrac{\partial f}{\partial I}>0.
$$
By hypotheses \eqref{H2} and \eqref{H3}, if $R_0>1$,
then the system admits a unique endemic equilibrium
$E^*(S^*,I^*,C^*,A^*)$ with $S^*\in\left(0,\dfrac{\Lambda}{\mu}\right)$,
$I^*>0$, $C^*>0$ and $A^*>0$. This completes the proof.
\end{proof}


\section{Stability analysis}
\label{sec:03}

In this section, we focus on the stability analysis of the equilibria.
In the case of $R_0\leq1$, we have:

\begin{theorem}
\label{Ef}
The disease-free equilibrium $E_f$
is globally asymptotically stable if
$R_0\leq1$.
\end{theorem}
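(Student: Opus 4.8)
The plan is to build a Lyapunov functional on the attracting, forward-invariant region and then close with LaSalle's invariance principle. From the boundedness theorem we have $\dot{N}\le \Lambda-\mu N$, so $\limsup_{t\to\infty}N(t)\le \Lambda/\mu$; hence the compact set $\Omega=\{(S,I,C,A)\in\r^{4}_{+}:N\le \Lambda/\mu\}$ is forward invariant (at $N=\Lambda/\mu$ one has $\dot N=-dA\le 0$) and attracts every nonnegative trajectory. Inside $\Omega$ one has in particular $S\le N\le \Lambda/\mu$, which will be the decisive inequality. I would therefore carry out the analysis on $\Omega$ and then use attractivity to extend the conclusion to all of $\r^{4}_{+}$.

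For the candidate functional I would take the linear combination of the infected compartments
$$L = I + \frac{\omega}{\xi_2}\,C + \frac{\alpha}{\xi_3}\,A,$$
which is nonnegative on $\Omega$ and vanishes exactly on $\{I=C=A=0\}$. Differentiating along \eqref{1} and substituting $\dot I,\dot C,\dot A$, the two weights $\omega/\xi_2$ and $\alpha/\xi_3$ are precisely those that cancel the $C$ and $A$ contributions, leaving
$$\dot L = \Big(f(S,I) - \xi_1 + \frac{\omega\phi}{\xi_2} + \frac{\alpha\rho}{\xi_3}\Big) I = \Big(f(S,I) - \frac{\mathcal{D}}{\xi_2\xi_3}\Big) I,$$
where the identity $\xi_1 - \frac{\omega\phi}{\xi_2} - \frac{\alpha\rho}{\xi_3}=\mathcal{D}/(\xi_2\xi_3)$ is exactly the algebraic relation already established in the computation of $G(\Lambda/\mu)$ in the previous proof.

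Next I would estimate $f$. By hypothesis \eqref{H2} ($f$ increasing in $S$) and \eqref{H3} ($f$ nonincreasing in $I$), for every point of $\Omega$ we have $f(S,I)\le f(\Lambda/\mu,I)\le f(\Lambda/\mu,0)$. Recalling from \eqref{R0} that $f(\Lambda/\mu,0)=(\mathcal{D}/(\xi_2\xi_3))R_0$, this yields
$$\dot L \le \Big(f(\Lambda/\mu,0) - \frac{\mathcal{D}}{\xi_2\xi_3}\Big) I = \frac{\mathcal{D}}{\xi_2\xi_3}\,(R_0-1)\,I \le 0 \quad\text{whenever } R_0\le 1,$$
since $\mathcal{D}>0$ and $I\ge 0$; thus $L$ is a Lyapunov function on $\Omega$.

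Finally I would invoke LaSalle's invariance principle and identify the largest invariant set $M\subseteq\{\dot L=0\}$. When $R_0<1$ the factor $(R_0-1)$ is strictly negative, so $\dot L=0$ forces $I=0$; feeding $I\equiv 0$ into the $C$-, $A$- and $S$-equations of \eqref{1} drives $C\to 0$, $A\to 0$ and $S\to \Lambda/\mu$, whence $M=\{E_f\}$. The main obstacle is the borderline case $R_0=1$: there $\dot L=0$ requires only $f(S,I)=f(\Lambda/\mu,0)$ or $I=0$, and since \eqref{H3} is not assumed strict, the first alternative forces merely $S=\Lambda/\mu$ via the strict monotonicity \eqref{H2}. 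I would dispatch this by noting that on an invariant trajectory with $S\equiv \Lambda/\mu$ one has $\dot S\equiv 0$, hence $f(\Lambda/\mu,I)\,I=0$; because $f(\Lambda/\mu,I)>0$ (a consequence of \eqref{H1} and \eqref{H2}), this again gives $I\equiv 0$, so $M$ collapses to $\{E_f\}$ and global asymptotic stability follows.
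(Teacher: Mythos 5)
Your route is genuinely different from the paper's. The paper works with the Volterra-type functional $V_1 = S - S_0 - \int_{S_0}^{S} \frac{f(S_0,0)}{f(X,0)}\,dX + I + \frac{\omega}{\xi_2}C + \frac{\alpha}{\xi_3}A$, whose $S$-dependent term produces the sign-definite factor $\mu\left(1-\frac{f(S_0,0)}{f(S,0)}\right)(S_0-S)\leq 0$ and, crucially, makes $\dot V_1\leq 0$ on the whole positive orthant with no restriction to a bounded region. You drop the $S$-term, keep only the linear combination $L = I + \frac{\omega}{\xi_2}C + \frac{\alpha}{\xi_3}A$, and pay for that simplification by having to confine the estimate $f(S,I)\leq f(\Lambda/\mu,0)$ to the region $\Omega=\{N\leq \Lambda/\mu\}$. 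Your algebra is correct: the weights do cancel the $C$ and $A$ contributions, the identity $\xi_1-\frac{\omega\phi}{\xi_2}-\frac{\alpha\rho}{\xi_3}=\mathcal{D}/(\xi_2\xi_3)$ holds, and your handling of the borderline case $R_0=1$ \emph{inside} $\Omega$ (strict monotonicity \eqref{H2} forces $S=\Lambda/\mu$, then $\dot S=0$ gives $f(\Lambda/\mu,I)I=0$, hence $I=0$) is sound.

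The gap is the sentence ``use attractivity to extend the conclusion to all of $\r^4_+$.'' A trajectory with $N(0)>\Lambda/\mu$ satisfies only $N(t)\leq N(0)e^{-\mu t}+\frac{\Lambda}{\mu}(1-e^{-\mu t})$, which can remain strictly above $\Lambda/\mu$ for all finite $t$; such a trajectory never enters $\Omega$, so LaSalle applied on $\Omega$ says nothing about it, and $\dot L$ need not be $\leq 0$ outside $\Omega$ since there $f(S,I)$ may exceed $f(\Lambda/\mu,0)$. For $R_0<1$ this is repairable: work on $\Omega_\delta=\{N\leq \Lambda/\mu+\delta\}$, which is forward invariant and is reached in finite time, and choose $\delta$ small enough that $f(\Lambda/\mu+\delta,0)<\mathcal{D}/(\xi_2\xi_3)$. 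But at $R_0=1$ this fix fails: by \eqref{H2}, $f(\Lambda/\mu+\delta,0)>f(\Lambda/\mu,0)=\mathcal{D}/(\xi_2\xi_3)$ for every $\delta>0$, so $\dot L$ can be positive on $\Omega_\delta\setminus\Omega$. Closing the case $R_0=1$ for initial data outside $\Omega$ requires an additional argument (for instance, that every omega-limit set is a compact invariant subset of $\Omega$ on which your LaSalle analysis applies, together with Lyapunov stability of $E_f$). The paper's extra integral term is exactly what buys a functional that is nonincreasing globally and so avoids this issue altogether.
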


\begin{proof}
To study the global stability of $E_f$ for system \eqref{1},
we construct a Lyapunov functional at $E_f$ as follows:
\begin{equation*}
V_{1}(S,I,C,A)=S-S_{0}-\int_{S_{0}}^{S}\frac{f(S_{0},0)}{f(X,0)}dX
+I+\dfrac{\omega}{\xi_2}C+\dfrac{\alpha}{\xi_3}A,
\end{equation*}
where $S_0=\dfrac{\Lambda}{\mu}$ and $\xi_i$, $i = 2, 3$, 
are given as in \eqref{xis}. It is clear from hypothesis 
\eqref{H2} that $V_{1}$ is non-negative.
The time derivative of $V_{1}$ along
the solution of system \eqref{1} is given by
\begin{eqnarray*}
\dfrac{d V_1}{dt}
&=& \left(1-\dfrac{f(S_{0},0)}{f(S,0)}\right)\dot{S}
+\dot{I}+\dfrac{\omega}{\xi_2}\dot{C}+\dfrac{\alpha}{\xi_3}\dot{A}\\
&=& \mu \left(1-\dfrac{f(S_{0},0)}{f(S,0)}\right)(S_0-S)
+I\left[\dfrac{f(S,I)}{f(S,0)}f(S_0,0)-\left(\xi_1
-\dfrac{\omega \phi}{\xi_2}-\dfrac{\alpha \rho}{\xi_3}\right)\right]\\
&=&\mu \left(1-\dfrac{f(S_{0},0)}{f(S,0)}\right)(S_0-S)
+\dfrac{\mathcal{D}}{\xi_2 \xi_3}I\left(\dfrac{f(S,I)}{f(S,0)}R_0-1\right)\\
&\leq& \mu \left(1-\dfrac{f(S_{0},0)}{f(S,0)}\right)(S_0-S)
+\dfrac{\mathcal{D}}{\xi_2 \xi_3}I\left(R_0-1\right),
\end{eqnarray*}
where $\mathcal{D}$ is given in \eqref{mathcal:D}
and $\xi_i$, $i = 1, 2, 3$, are defined by \eqref{xis}.
Using the inequalities
\begin{eqnarray*}
1-\dfrac{f(S_{0},0)}{f(S,0)}
&\geq& 0 \quad \text{for}\quad S\geq S_0,\\
1-\dfrac{f(S_{0},0)}{f(S,0)}
&<& 0 \quad \text{for}\quad S<S_0,
\end{eqnarray*}
we have
\begin{equation*}
\left(1-\dfrac{f(S_{0},0)}{f(S,0)}\right)(S_0-S)\leq 0.
\end{equation*}
Therefore, $ \frac{d V_1}{dt}\leq0$. Furthermore, the largest compact
invariant set in $\left\{(S,I,C,A)\mid\frac{d V_1}{dt}=0\right\}$
is just the singleton $E_f$. Using LaSalle's invariance principle \cite{LaSalle},
we conclude that $E_f$ is globally asymptotically stable.
\end{proof}

Now, we study the stability of the equilibria in the case $R_0>1$.
For this, we assume that function $f$ satisfies the following condition:
\begin{gather}
\label{H4}\tag{$H_{4}$}
\bigg(1-\dfrac{f(S,I)}{f(S,I^*)}\bigg)\bigg(\dfrac{f(S,I^*)}{f(S,I)}
-\dfrac{I}{I^*}\bigg)\leq0, \ \text{ for all } \ S,I>0.
\end{gather}

\begin{theorem}
(i) If $R_0>1$, then the disease-free equilibrium $E_f$ is unstable.\\
(ii) If $R_0>1$ and \eqref{H4} holds, then the endemic equilibrium
$E^*$ is globally asymptotically stable.
\end{theorem}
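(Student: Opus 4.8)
The plan is to treat the two assertions by different techniques: part~(i) by linearisation at $E_f$, and part~(ii) by a Volterra-type Lyapunov functional combined with LaSalle's invariance principle \cite{LaSalle}. For~(i), I would compute the Jacobian of \eqref{1} at $E_f=(\Lambda/\mu,0,0,0)$. Since $I=0$ there, the $S$-row and column decouple and furnish the eigenvalue $-\mu$, so stability is decided by the $3\times3$ block $M$ acting on $(I,C,A)$. Evaluating $\det M$ and using the identity $\mathcal{D}=\xi_1\xi_2\xi_3-\omega\phi\xi_3-\alpha\rho\xi_2$, which one checks directly from \eqref{mathcal:D} and \eqref{xis}, I expect $\det M=f(\Lambda/\mu,0)\xi_2\xi_3-\mathcal D=\mathcal D\,(R_0-1)$. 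As $\mathcal D>0$, for $R_0>1$ the characteristic cubic $p(\lambda)=\det(\lambda I-M)$ satisfies $p(0)=-\det M<0$ while $p(\lambda)\to+\infty$, so $M$ has a positive real eigenvalue and $E_f$ is unstable. (Alternatively one may simply quote the van den Driessche--Watmough theorem of \cite{Rzero}.)

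For~(ii), the crux is the choice of Lyapunov functional. Guided by $V_1$ but re-centred at $E^*$ and comparing the incidence at the endemic level $I^*$, I would take $V_2=\big(S-S^*-\int_{S^*}^{S}\tfrac{f(S^*,I^*)}{f(X,I^*)}\,dX\big)+\big(I-I^*-I^*\ln\tfrac{I}{I^*}\big)+\tfrac{\omega}{\xi_2}\big(C-C^*-C^*\ln\tfrac{C}{C^*}\big)+\tfrac{\alpha}{\xi_3}\big(A-A^*-A^*\ln\tfrac{A}{A^*}\big)$, which is non-negative by \eqref{H2} and the elementary inequality $x-1-\ln x\ge0$. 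I would then differentiate $V_2$ along \eqref{1} and eliminate the parameters through the equilibrium relations $\Lambda=\mu S^*+f(S^*,I^*)I^*$, $\xi_1 I^*=f(S^*,I^*)I^*+\alpha A^*+\omega C^*$, $\xi_2C^*=\phi I^*$ and $\xi_3A^*=\rho I^*$.

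Writing $f=f(S,I)$, $f_1=f(S,I^*)$, $f^*=f(S^*,I^*)$ and $v=I/I^*$, $w=C/C^*$, $z=A/A^*$, I expect $\tfrac{dV_2}{dt}$ to collapse into four sign-definite pieces: a susceptibility term $\mu(S^*-S)\big(1-\tfrac{f^*}{f_1}\big)\le0$ (by the strict monotonicity \eqref{H2}, which also shows $f_1=f^*$ forces $S=S^*$); two treatment-class terms $\omega C^*\big(2-\tfrac{w}{v}-\tfrac{v}{w}\big)$ and $\alpha A^*\big(2-\tfrac{z}{v}-\tfrac{v}{z}\big)$, each $\le0$ by AM--GM; and the infection term $f^*I^*\big[\big(3-\tfrac{f}{f^*}-\tfrac{f^*}{f_1}-\tfrac{f_1}{f}\big)+\big(1-\tfrac{f}{f_1}\big)\big(\tfrac{f_1}{f}-v\big)\big]$. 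In the last expression the three ratios multiply to $1$, so their sum is $\ge3$ by AM--GM, and the remaining bracket is exactly the quantity assumed non-positive in \eqref{H4}. The main obstacle is precisely this bookkeeping: one must verify that after all cancellations the only non-sign-definite residue is the \eqref{H4} combination, which is what dictates and consumes hypothesis \eqref{H4}. Having established $\tfrac{dV_2}{dt}\le0$, I would invoke LaSalle's principle: on $\{\tfrac{dV_2}{dt}=0\}$ equality forces $S=S^*$ and $\tfrac{f}{f^*}=\tfrac{f^*}{f_1}=\tfrac{f_1}{f}=1$, and on the largest invariant subset the constraint $\dot S=0$ then yields $I=I^*$, whence $w=z=1$ gives $C=C^*$, $A=A^*$; thus the invariant set reduces to the singleton $\{E^*\}$ and $E^*$ is globally asymptotically stable.
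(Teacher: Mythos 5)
Your proposal is correct and follows essentially the same route as the paper: part (i) via the linearisation at $E_f$, where the sign of the constant term (equivalently $\det M=\mathcal D(R_0-1)$) yields a positive real eigenvalue, and part (ii) via the identical Lyapunov functional $V_2$, the same decomposition of $\tfrac{dV_2}{dt}$ into the monotonicity term, the two AM--GM treatment-class terms, the AM--GM triple-ratio term, and the residual bracket controlled by \eqref{H4}, followed by LaSalle. Your closing argument identifying the largest invariant set (using $\dot S=0$ to force $I=I^*$ once $S=S^*$ and $f(S^*,I)=f(S^*,I^*)$) is in fact slightly more detailed than the paper's, which simply asserts that this set is the singleton $\{E^*\}$.
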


\begin{proof}
We begin by proving the instability of $E_f$. For any arbitrary equilibrium $\overline{E}(\overline{S},\overline{I},\overline{C},\overline{A})$,
the characteristic equation is given by
$$
\begin{array}{ccc}
\left|
\begin{array}{cccc}
-\mu-\dfrac{\partial f}{\partial S}\overline{I}-\lambda
& -\dfrac{\partial f}{\partial I}\overline{I}-f(\overline{S},\overline{I})
& 0 & 0 \\ [8 pt]
\dfrac{\partial f}{\partial S}\overline{I}
& \dfrac{\partial f}{\partial I}\overline{I}
+f(\overline{S},\overline{I})-\xi_1 -\lambda
& \omega & \alpha \\[8 pt]
0 & \phi & -(\xi_2 +\lambda) & 0 \\[8 pt]
0 & \rho & 0 & -(\xi_3 +\lambda)
\end{array}
\right| & = & 0,
\end{array}
$$
where $\xi_i$, $i = 1, 2, 3$, are defined in \eqref{xis}.
Evaluating the characteristic equation at $E_f(S_0,0,0,0)$, we have
$$
\lambda^3+a_1\lambda^2+a_2\lambda+a_3=0,
$$
where
\begin{eqnarray*}
a_1 &=& \xi_1+\xi_2+\xi_3- f(S_0,0),\\
a_2 &=& \xi_1\xi_3+\xi_1\xi_2+\xi_2\xi_3
-(\xi_2+\xi_3)f(S_0,0)-\phi \omega-\rho \alpha,\\
a_3 &=& (1-R_0)\mathcal{D},
\end{eqnarray*}
$\mathcal{D}$ as in \eqref{mathcal:D}.
It is clear that $a_3<0$ when $R_0>1$. Then,
the disease-free equilibrium $E_f$ is unstable.
Now, we establish the global stability of the endemic equilibrium $E^*$.
For that, we define a Lyapunov functional $V_2$ as follows:
\begin{multline*}
V_{2}(S,I,C,A)
=S-S^{*}-\displaystyle{\int_{S^{*}}^{S}}\frac{f(S^{*},I^{*})}{f(X,I^*)}dX
+I-I^{*}-I^{*}\ln\left(\dfrac{I}{I^*}\right)\\
+\dfrac{\omega}{\xi_2}\left(C-C^{*}-C^{*}\ln\left(\dfrac{C}{C^*}\right)\right)
+\dfrac{\alpha}{\xi_3}\left(A-A^{*}-A^{*}\ln\left(\dfrac{A}{A^*}\right)\right).
\end{multline*}
The time derivative of $V_2$ along the positive solutions
of system \eqref{1} satisfies
\begin{eqnarray*}
\dfrac{d V_2}{dt}
&=& \left(1-\dfrac{f(S^{*},I^{*})}{f(S,I^*)}\right)\dot{S}
+\left(1-\dfrac{I^{*}}{I}\right)\dot{I}+\dfrac{\omega}{\xi_2}\left(
1-\dfrac{C^{*}}{C}\right)\dot{C}+\dfrac{\alpha}{\xi_3}\left(
1-\dfrac{A^{*}}{A}\right)\dot{A}.
\end{eqnarray*}
By applying $\Lambda=\mu S^* +f(S^{*},I^{*})I^{*}$ and
$\xi_1 I^*=f(S^{*},I^{*})I^{*}+\omega C^* +\alpha A^*$, we get
\begin{eqnarray*}
\dfrac{d V_2}{dt}
&=& \mu(S^{*}-S) \left(1-\dfrac{f(S^{*},I^{*})}{f(S,I^*)}\right)
+f\left(S^{*},I^{*}\right)I^{*}\left(1-\dfrac{f(S^{*},I^{*})}{f(S,I^*)}\right)\\
& & -f(S,I)I(1-\dfrac{f(S^*,I^*)}{f(S,I^*)})+f(S,I)I-\xi_1I+\alpha A+\omega C \\
& & -f(S,I)I^*+\xi_1 I^*-\dfrac{\alpha AI^*}{I}
-\dfrac{\omega CI^*}{I}+\dfrac{\omega \phi I}{\xi_2}\\
& &-\omega C-\dfrac{\omega \phi C^* I}{\xi_2 C} +\omega C^*
+\dfrac{\alpha \rho I}{\xi_3}-\alpha A 
-\dfrac{\alpha \rho A^* I}{\xi_3 A}+\alpha A^*\\
&=& \mu(S^{*}-S) \left(1-\dfrac{f(S^{*},I^{*})}{f(S,I^*)}\right)
+f(S^{*},I^{*})I^{*}\left(1-\dfrac{f(S^{*},I^{*})}{f(S,I^*)}\right)\\
& & +\dfrac{f(S^*,I^*)f(S,I)I}{f(S,I^*)}+\omega C^*\left(
1-\dfrac{CI^*}{C^*I}\right)+\alpha A^*\left(1-\dfrac{AI^*}{A^*I}\right)\\
& &+\dfrac{\omega \phi}{\xi_2}I^*\left(1-\dfrac{C^*I}{CI^*}\right)
-\dfrac{\omega \phi I^*}{\xi_2}+\dfrac{\alpha \rho}{\xi_3}I^*\left(1
-\dfrac{A^*I}{AI^*}\right)-\dfrac{\alpha \rho I^*}{\xi_3}-\xi_1I-f(S,I)I^*+\xi_1I^*\\
& & +\dfrac{\omega \phi I}{\xi_2}+\dfrac{\alpha \rho I}{\xi_3}+f(S^*,I^*)I^*-f(S^*,I^*)I^*\\
&=& \mu(S^{*}-S) \left(1-\dfrac{f(S^{*},I^{*})}{f(S,I^*)}\right) 
+2f(S^*,I^*)I^* -\dfrac{f(S^*,I^*)^2I^*}{f(S,I^*)}\\
& & +\dfrac{f(S^*,I^*)f(S,I)I}{f(S,I^*)}+\omega C^*\left(2
-\dfrac{CI^*}{C^*I}-\dfrac{C^*I}{CI^*}\right)\\
& &+\alpha A^*\left(2-\dfrac{AI^*}{A^*I}-\dfrac{A^*I}{AI^*}\right)
+\dfrac{I}{I^*}\left(f(S^*,I^*)I^*-\xi_1I^*+\alpha A^*+\omega C^*\right)-f(S^*,I^*)I.
\end{eqnarray*}
By $f(S^*,I^*)I^*-\xi_1I^*+\alpha A^*+\omega C^*=0$, we get
\begin{eqnarray*}
\dfrac{d V_2}{dt}&=& \mu(S^{*}-S) \left(1-\dfrac{f(S^{*},I^{*})}{f(S,I^*)}\right)
+\omega C^*\left(2-\dfrac{CI^*}{C^*I}-\dfrac{C^*I}{CI^*}\right)\\
& &+\alpha A^*\left(2-\dfrac{AI^*}{A^*I}-\dfrac{A^*I}{AI^*}\right)
+2f(S^*,I^*)I^*+\dfrac{f(S^*,I^*)f(S,I)I}{f(S,I^*)}\\
& &-\dfrac{f(S^*,I^*)^2I^*}{f(S,I^*)}-f(S^*,I^*)I-f(S,I)I^*\\
&=& \mu(S^{*}-S) \left(1-\dfrac{f(S^{*},I^{*})}{f(S,I^*)}\right)
+\omega C^*\left(2-\dfrac{CI^*}{C^*I}-\dfrac{C^*I}{CI^*}\right)\\
& &+\alpha A^*\left(2-\dfrac{AI^*}{A^*I}-\dfrac{A^*I}{AI^*}\right)
+3f(S^*,I^*)I^*-f(S^*,I^*)I^*+\dfrac{f(S^*,I^*)f(S,I)I}{f(S,I^*)}\\
& &-\dfrac{f(S^*,I^*)^2I^*}{f(S,I^*)}-f(S^*,I^*)I-f(S,I)I^*
+\dfrac{f(S,I^*)f(S^*,I^*)I^*}{f(S,I)}-\dfrac{f(S,I^*)f(S^*,I^*)I^*}{f(S,I)}\\
&=& \mu(S^{*}-S) \left(1-\dfrac{f(S^{*},I^{*})}{f(S,I^*)}\right)
+\omega C^*\left(2-\dfrac{CI^*}{C^*I}-\dfrac{C^*I}{CI^*}\right)\\
& &+\alpha A^*\left(2-\dfrac{AI^*}{A^*I}-\dfrac{A^*I}{AI^*}\right)
+f(S^*,I^*)I^*\left[-1+\dfrac{f(S,I)I}{f(S,I^*)I^*}-\dfrac{I}{I^*}
+\dfrac{f(S,I^*)}{f(S,I)}\right]\\
& & +f(S^*,I^*)I^*\left[3-\dfrac{f(S^*,I^*)}{f(S,I^*)}
-\dfrac{f(S,I)}{f(S^*,I^*)}-\dfrac{f(S,I^*)}{f(S,I)}\right].
\end{eqnarray*}
Furthermore, we can easily see that 
$$
-1+\dfrac{f(S,I)I}{f(S,I^*)I^*}-\dfrac{I}{I^*}
+\dfrac{f(S,I^*)}{f(S,I)}
=\left(1-\dfrac{f(S,I)}{f(S,I^*)}\right)\left(
\dfrac{f(S,I^*)}{f(S,I)}-\dfrac{I}{I^*}\right).
$$
Thus,
\begin{equation}
\label{eq:rv2:1}
\begin{split}
\dfrac{d V_2}{dt}
&= \mu(S^{*}-S) \left(1-\dfrac{f(S^{*},I^{*})}{f(S,I^*)}\right)
+\omega C^*\left(2-\dfrac{CI^*}{C^*I}-\dfrac{C^*I}{CI^*}\right)\\
& \quad +\alpha A^*\left(2-\dfrac{AI^*}{A^*I}-\dfrac{A^*I}{AI^*}\right)
+f(S^*,I^*)I^*\left(1-\dfrac{f(S,I)}{f(S,I^*)}\right)\left(
\dfrac{f(S,I^*)}{f(S,I)}-\dfrac{I}{I^*}\right)\\
& \quad +f(S^*,I^*)I^*\left[3-\dfrac{f(S^*,I^*)}{f(S,I^*)}
-\dfrac{f(S,I)}{f(S^*,I^*)}-\dfrac{f(S,I^*)}{f(S,I)}\right].
\end{split}
\end{equation}
Since the arithmetic mean is greater than or equal
to the geometric mean, it is clear that
\begin{equation}
\label{eq:rv2:2}
\begin{array}{l}
2-\dfrac{CI^*}{C^* I}-\dfrac{C^*I}{CI^* }\leq 0, \\[12 pt]
2-\dfrac{AI^*}{A^* I}-\dfrac{A^*I}{AI^* }\leq 0,\\[12 pt]
3-\dfrac{f(S^*,I^*)}{f(S,I^*)}-\dfrac{f(S,I)}{f(S^*,I^*)}
-\dfrac{f(S,I^*)}{f(S,I)}\leq 0,
\end{array}
\end{equation}
and the equalities hold only for $S=S^*$, $I=I^*$, $C=C^*$ and $A=A^*$.
Using the inequalities
\begin{eqnarray*}
1-\dfrac{f(S^*,I^*)}{f(S,I^*)} &\geq& 0 \quad \text{for}\quad S\geq S^*,\\
1-\dfrac{f(S^*,I^*)}{f(S,I^*)} &<& 0 \quad \text{for}\quad S<S^*,
\end{eqnarray*}
which are ensured by hypothesis \eqref{H2}, we have
\begin{equation}
\label{eq:rv2:3}
(S^*-S)\left(1-\dfrac{f(S^*,I^*)}{f(S,I^*)}\right)\leq 0.
\end{equation}
By \eqref{H4}, we have 
\begin{equation}
\label{eq:rv2:4}
\left(1-\dfrac{f(S,I)}{f(S,I^*)}\right)\left(
\dfrac{f(S,I^*)}{f(S,I)}-\dfrac{I}{I^*}\right)
\leq 0.
\end{equation}
Therefore, it follows from \eqref{eq:rv2:1}, \eqref{eq:rv2:2}, \eqref{eq:rv2:3}
and \eqref{eq:rv2:4} that $\frac{d V_2}{dt}\leq0$. Furthermore, the largest
compact invariant set in $\left\{(S,I,C,A)\mid\frac{d V_2}{dt}=0\right\}$
is just the singleton $E^*$. Using LaSalle's invariance principle \cite{LaSalle},
we conclude that $E^*$ is globally asymptotically stable.
\end{proof}


\section{Sensitivity of the basic reproduction number}
\label{sec:04}

Often, susceptible individuals acquire HIV infection by contact
with individuals living with the virus but with no clinical AIDS symptoms
or only mild ones. Thus, the incidence function $f(S,I)$
is assumed to depend on the \emph{effective contact rate $\beta>0$}.
Then, $f$ can take many forms. Table~\ref{Table:1} collects
the most popular of such forms that one can find in the existing literature.
For any form of $f(S,I)$ given in Table~\ref{Table:1},
it is easy to verify that $\dfrac{\partial f(S,I)}{\partial \beta}
=\dfrac{f(S,I)}{\beta}$, which is important for examining
the robustness of model \eqref{1} to $\beta$.
\begin{table}[!ht]
\centering
\caption{Some special incidence functions $f(S,I)$,
where $\alpha_i\geq0$, $i=0,\ldots,3$.}
\label{Table:1}
{\renewcommand{\arraystretch}{2}	
\begin{tabular}{l c c}
\hline\hline
Incidence functions &   ${f(S,I)}$ &  References \\
\hline\hline
Bilinear  & $\beta S$ & \cite{Bilinear1,Bilinear2,Bilinear} \\[0.15cm]
\hline
Saturated & $\dfrac{\beta S}{1+\alpha_1S}$ or $\dfrac{\beta S}{1+\alpha_2I}$
& \cite{Saturated1,Saturated2} \\[0.15cm]
\hline
Beddington--DeAngelis & $\dfrac{\beta S }{1+\alpha_1S+\alpha_2I}$
& \cite{Beddington,Cantrell,DeAngelis}\\[0.15cm]
\hline
Crowley--Martin & $\dfrac{\beta S}{1+\alpha_1S+\alpha_2I+\alpha_1 \alpha_2SI}$
& \cite{Crowley,Liu,Zhou}\\[0.15cm]
\hline
Specific nonlinear  & $\dfrac{\beta S}{1+\alpha_1S+\alpha_2I+\alpha_3SI}$
& \cite{mahrouf1,Hattaf2,HattafD2,Lotfi,Maziane1}\\[0.15cm]
\hline
Hattaf--Yousfi & $\dfrac{\beta S}{\alpha_0+\alpha_1S+\alpha_2I+\alpha_3SI}$
& \cite{Hattaf-Yousfi,mahrouf} \\[0.15cm]
\hline\hline
\end{tabular}}
\end{table}

To determine the robustness of model \eqref{1} to some parameter values,
including $\beta$, we examine the sensitivity of the basic
reproduction number $R_0$ with respect to these parameters
by the so called \emph{sensitivity index}.

\begin{definition}[See \cite{Chitnis,Rodriques}]
\label{sensi}
The normalized forward sensitivity index of a variable $u$,
that depends differentially on a parameter $p$, is defined as
\begin{equation}
\label{sensidf}
\Upsilon^{u}_p := \dfrac{\partial u}{\partial p}\times\dfrac{p}{u}.
\end{equation}
\end{definition}

From \eqref{R0} and Definition~\ref{sensi}, we derive the normalized
forward sensitivity index of $R_0$ with respect to $\beta$, using any
form of incidence functions cited above, and we get the following proposition.

\begin{proposition}
The normalized forward sensitivity index of $R_0$ with
respect to $\beta$ is given by
\begin{equation*}
\Upsilon^{R_0}_\beta =\dfrac{\partial f(S_0,0)}{\partial \beta}
\times\dfrac{\beta}{f(S_0,0)}.
\end{equation*}
\end{proposition}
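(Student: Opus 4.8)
The plan is to compute the index directly from Definition~\ref{sensi}, exploiting the fact that the effective contact rate $\beta$ enters the expression \eqref{R0} for $R_0$ only through the single factor $f(S_0,0)$, where $S_0=\Lambda/\mu$. By definition, $\Upsilon^{R_0}_\beta = \frac{\partial R_0}{\partial \beta}\cdot\frac{\beta}{R_0}$, so the whole task reduces to evaluating $\partial R_0/\partial\beta$ and then forming the normalizing quotient $\beta/R_0$.

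The key structural observation is that in $R_0 = f(S_0,0)\,\xi_2\xi_3/\mathcal{D}$ the quantities $\xi_2$ and $\xi_3$ from \eqref{xis}, together with $\mathcal{D}$ from \eqref{mathcal:D}, depend only on the demographic and transition parameters $\omega,\mu,\alpha,d,\rho,\phi$ and carry no dependence on $\beta$. Hence differentiating with respect to $\beta$ treats the prefactor $\xi_2\xi_3/\mathcal{D}$ as a constant, and one obtains $\partial R_0/\partial\beta = (\xi_2\xi_3/\mathcal{D})\,\partial f(S_0,0)/\partial\beta$.

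Substituting this into the definition of the index and inserting $R_0 = f(S_0,0)\,\xi_2\xi_3/\mathcal{D}$ in the denominator, the common factor $\xi_2\xi_3/\mathcal{D}$ cancels, leaving
$$
\Upsilon^{R_0}_\beta = \frac{\partial f(S_0,0)}{\partial\beta}\cdot\frac{\beta}{f(S_0,0)},
$$
which is exactly the claimed formula. There is no genuine obstacle here: the argument is a short direct differentiation, and the only point requiring care is to confirm that $\beta$ does not enter $\xi_2$, $\xi_3$, or $\mathcal{D}$, which is immediate from \eqref{xis} and \eqref{mathcal:D}. It is worth noting, as a closing remark, that this compact form is precisely what makes the index explicit for every incidence function in Table~\ref{Table:1}: since each of them satisfies $\partial f/\partial\beta = f/\beta$, evaluation at $(S_0,0)$ further reduces $\Upsilon^{R_0}_\beta$ to the constant value $1$, confirming the announced robustness of model \eqref{1} with respect to $\beta$.
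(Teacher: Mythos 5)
Your proposal is correct and follows exactly the same route as the paper, which simply states that the formula is a direct consequence of \eqref{R0} and \eqref{sensidf}; you merely spell out the cancellation of the $\beta$-independent factor $\xi_2\xi_3/\mathcal{D}$. Your closing observation that $\partial f/\partial\beta=f/\beta$ yields $\Upsilon^{R_0}_\beta=+1$ is likewise what the paper records in its subsequent remark.
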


\begin{proof}
It is a direct consequence of \eqref{R0} and \eqref{sensidf}.
\end{proof}

\begin{remark}
The sensitivity index of $R_0$ \eqref{R0} of the model
with respect to $\phi$, $\rho$, $\alpha$
and $\omega$ are given, respectively, by $\Upsilon^{R_0}_\phi
=-\dfrac{\mu \phi \xi_3}{\mathcal{D}}$, $\Upsilon^{R_0}_\alpha
=\dfrac{\rho \alpha(\mu + d)\xi_2}{\mathcal{D} \xi_3}$,
$\Upsilon^{R_0}_\rho =-\dfrac{\rho(\mu+d)\xi_2}{\mathcal{D}}$
and $\Upsilon^{R_0}_\omega =\dfrac{\mu \omega \phi \xi_3}{\mathcal{D}\xi_2}$,
$\xi_i$, $i = 2, 3$, as in \eqref{xis} and $\mathcal{D}$ as in \eqref{mathcal:D},
which are given in \cite{SICA} using the bilinear incidence function.
\end{remark}

\begin{remark}
For all incidence functions in Table~\ref{Table:1},
$\beta$ is always the most sensitive parameter and
have a high impact on $R_0$. Indeed, $\Upsilon^{R_0}_\beta$
is independent of any parameter of system \eqref{1}
with $\Upsilon^{R_0}_\beta =+1$.
\end{remark}


\section{Numerical simulations: application to Moroccan HIV/AIDS data}
\label{sec:05}

Taking into account the data from the Ministry of Health
in Morocco \cite{url:HIVdata:morocco},
we estimate the value of the HIV transmission rate to be
$\beta = 0.755$. Moreover, we consider the initial conditions
\eqref{eq:initcond} based on Moroccan data:
\begin{equation}
\label{eq:initcond}
S_0 = (N_0-(2+9))/N_0 ,\quad
\ I_0 = 2/N_0,\quad
\ C_0 = 0,\quad
\ A_0 = 9/N_0,
\end{equation}
with the initial total population $N_0 = 23023935$
\cite{url:worldbank:morocco}.
The values of the parameters $\rho = 0.1$ and $\alpha = 0.33$ are taken
from \cite{Sharomi:MathBio:2008} and \cite{Bhunu:BMB:2009:HIV:TB}, respectively.
We assume that after one year, the HIV infected individuals $I$ that are under
ART treatment have a low viral load \cite{Perelson} and, therefore,
are transferred to the class $C$. In agreement, we take $\phi=1$.
It is well known that taking ART therapy is a long-term commitment.
In our simulations, following \cite{SICA}, we assume that the default
treatment rate for $C$ individuals is approximately 11 years
($1/\omega$ years, to be precise).
Following \cite{url:popdata:morocco}, the natural death and recruitment
rates are assumed to take the values $\mu = 1/74.02$
and $\Lambda = 2.19 \mu$. The AIDS induced death
rate is assumed to be $d = 1$ based on \cite{ZwahlenEggerUNAIDS}.
All the considered parameter values are summarized in Table~\ref{values1}.
\begin{table}[!htb]
\centering
\caption{Parameters of the HIV/AIDS model \eqref{1} for Morocco data
between 1986 and 2015 \cite{url:HIVdata:morocco}.}
\label{values1}
\begin{tabular}{l  p{6.5cm} l l}
\hline \hline
{\small{Symbol}} &  {\small{Description}} & {\small{Value}} & {\small{References}}\\
\hline
{\small{$N(0)$}} & {\small{Initial population}} & {\small{$23023935$}}
& {\small{\cite{url:worldbank:morocco}}}\\
{\small{$\mu$}} & {\small{Natural death rate}} & {\small{$1/74.02$}}
& {\small{\cite{url:popdata:morocco}}}\\
{\small{$\Lambda$}} & {\small{Recruitment rate}} & {\small{$2.19\, \mu$}}
& {\small{\cite{url:popdata:morocco}}}\\
{\small{$\beta$}} & {\small{HIV transmission rate}} & {\small{$0.755$}} & {\small{Estimated}}\\
{\small{$\phi$}} & {\small{HIV treatment rate for $I$ individuals}} &  {\small{$1$}}
& {\small{\cite{Perelson}}} \\
{\small{$\rho$}} & {\small{Default treatment rate for $I$ individuals}}
& {\small{$0.1 $}} & {\small{\cite{Sharomi:MathBio:2008}}}\\
{\small{$\alpha$}} & {\small{AIDS treatment rate}}
& {\small{$0.33 $}} & {\small{\cite{Bhunu:BMB:2009:HIV:TB}}}\\
{\small{$\omega$}} & {\small{Default treatment rate for $C$ individuals}}
& {\small{$0.09$}} & {\small{\cite{SICA}}}\\
{\small{$d$}} & {\small{AIDS induced death rate}} & {\small{$1$}}
& {\small{\cite{ZwahlenEggerUNAIDS}}}\\
\hline \hline
\end{tabular}
\end{table}

In Figure~\ref{fig:model:fit}, we observe that model \eqref{1}
fits the real data reported in \cite{url:HIVdata:morocco}.
The HIV cases described by model \eqref{1} are given by
$I(t) + C(t) + \mu\left( I(t) + C(t)\right)$
for $t \in [0, 29]$, which corresponds to the interval of time between
the years of 1986 ($t=0$) and 2015 ($t=29$).
\begin{figure}[!ht]
\advance\leftskip1.6cm
\includegraphics[width=0.7\textwidth]{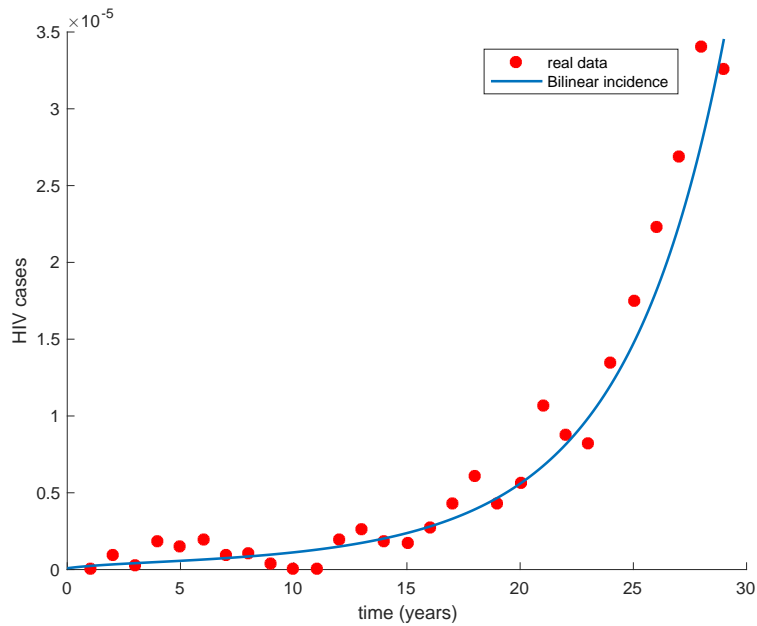}
\caption{Model \eqref{1} fitting the data
of HIV cases in Morocco between 1986 ($t = 0$) and 2015 ($t = 29$).}
\label{fig:model:fit}
\end{figure}

For the saturated incidence function  $\dfrac{\beta S}{1+\alpha_1S}$
\cite{Saturated1}, see Figure~\ref{fig:saturated:alpha1}.
\begin{figure}[!ht]
\advance\leftskip1.6cm
\includegraphics[width=0.7\textwidth]{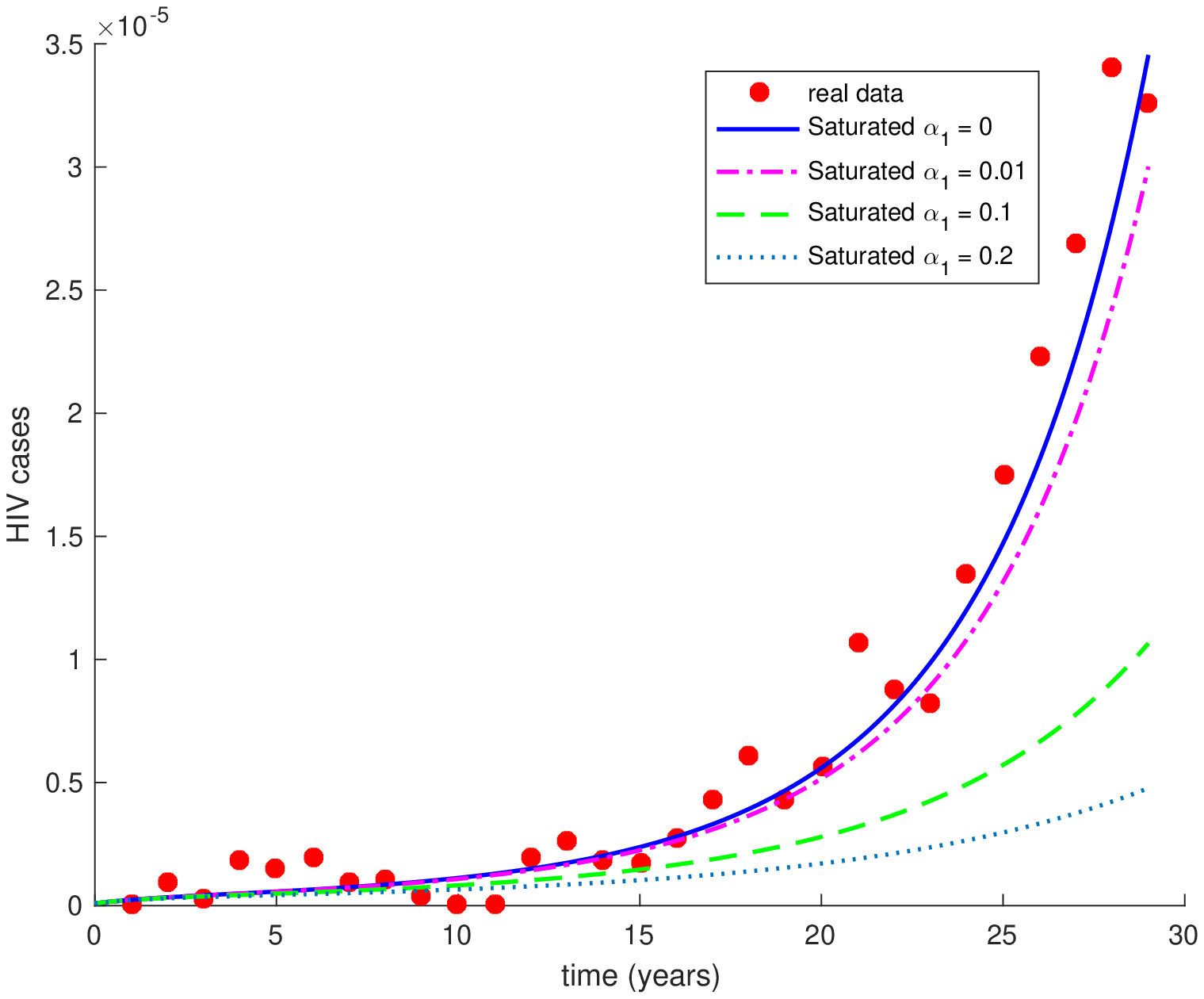}
\caption{Saturated incidence function $\dfrac{\beta S}{1+\alpha_1S}$
with $\alpha_1 \in \{0, 0.01, 0.1, 0.2  \}$.
Time $t = 0$ corresponds to the year 1986.}
\label{fig:saturated:alpha1}
\end{figure}

For the saturated incidence function $\dfrac{\beta S}{1+\alpha_2 I}$
\cite{Saturated2}, see Figure~\ref{fig:saturated:alpha2}.
\begin{figure}[!ht]
\advance\leftskip1.6cm
\includegraphics[width=0.7\textwidth]{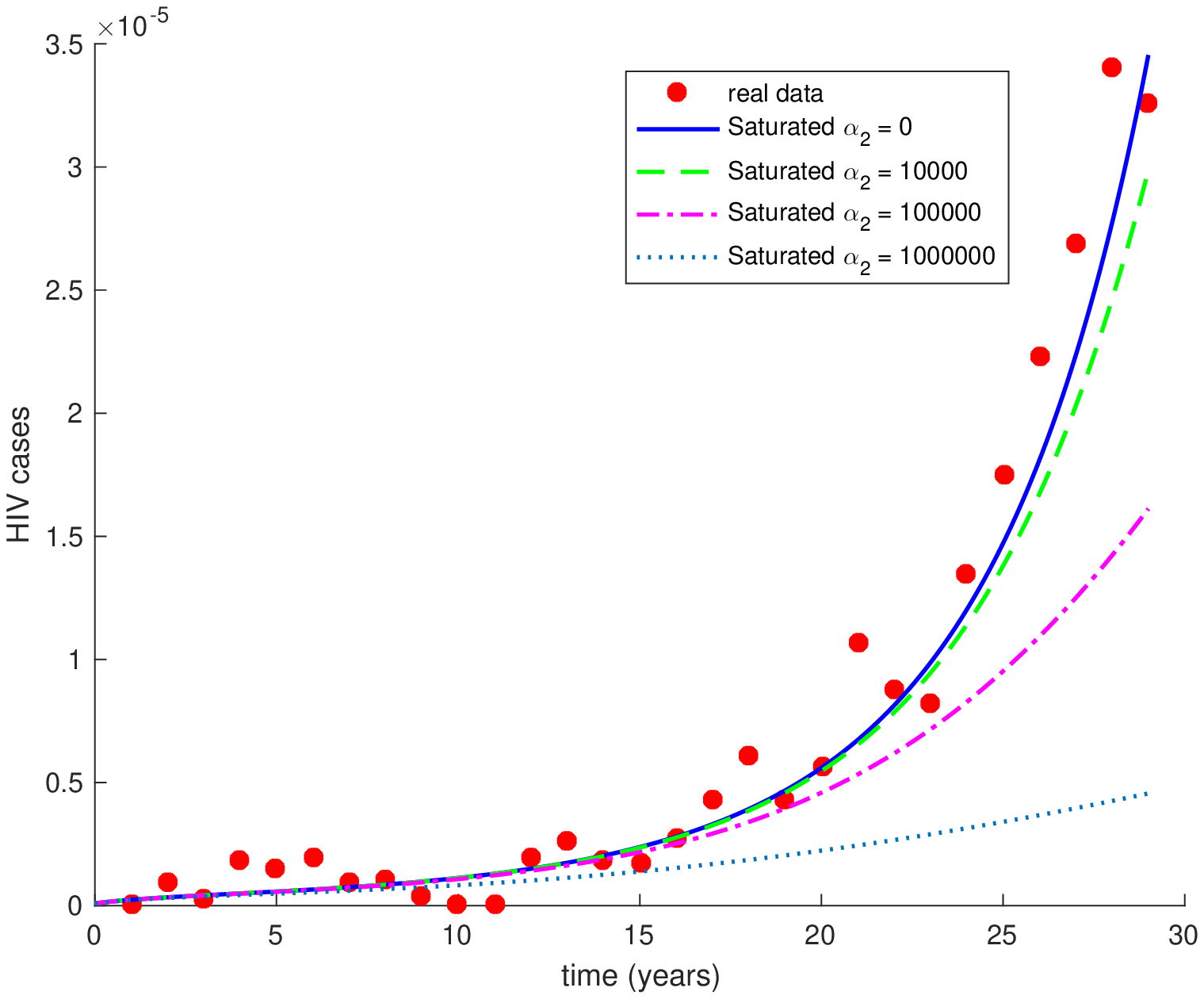}
\caption{Saturated incidence function $\dfrac{\beta S}{1+\alpha_2 I}$
with $\alpha_2 \in \{0, 10^4, 10^5, 10^6  \}$.
Time $t = 0$ corresponds to the year 1986.}
\label{fig:saturated:alpha2}
\end{figure}

For the Beddington--DeAngelis incidence function
$\dfrac{\beta S }{1 + \alpha_1 S + \alpha_2 I}$
\cite{Beddington,Cantrell,DeAngelis}, see Figure~\ref{fig:Beddington}.
\begin{figure}[!ht]
\advance\leftskip1.6cm
\includegraphics[width=0.7\textwidth]{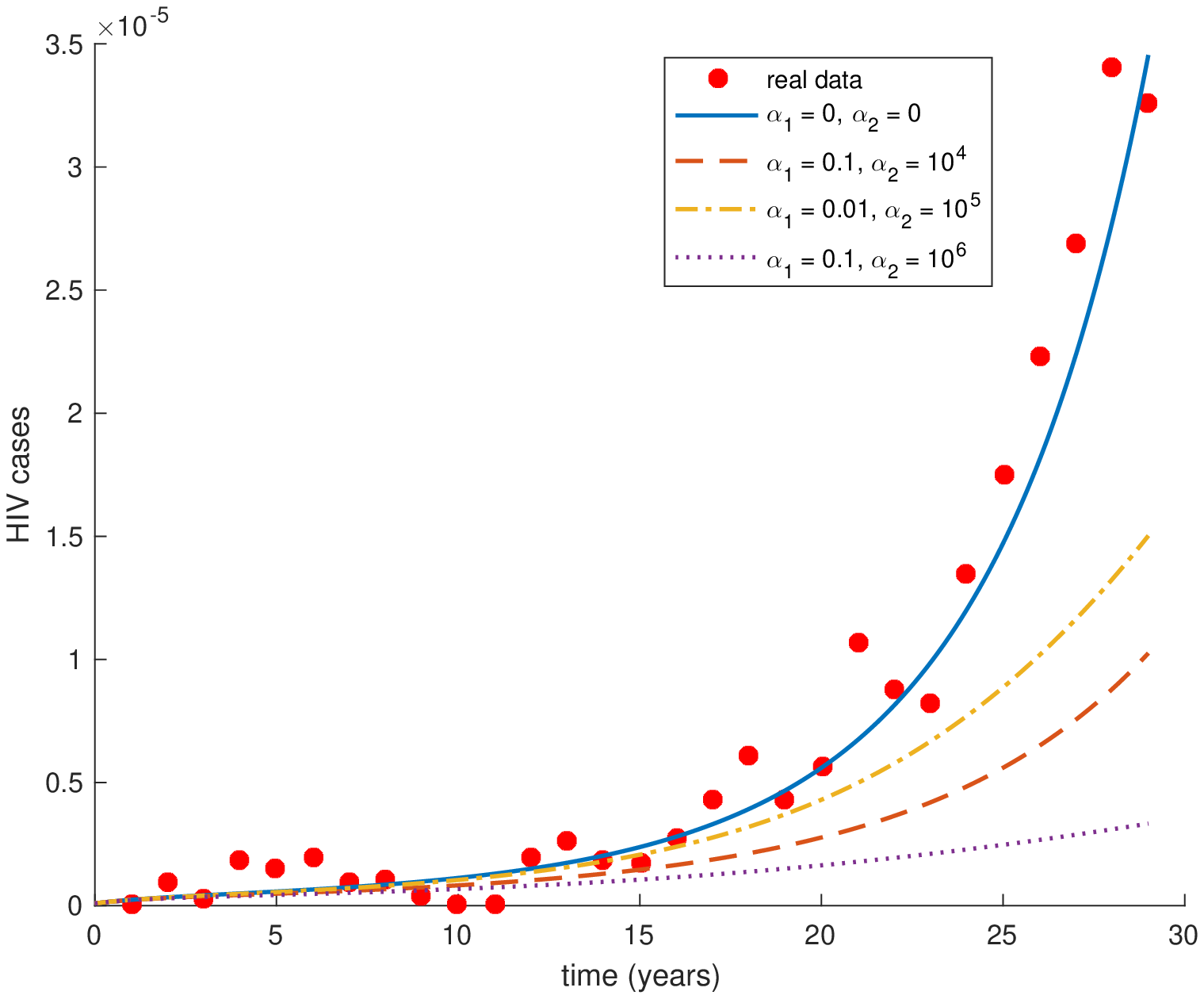}
\caption{Beddington--DeAngelis incidence function
$\dfrac{\beta S }{1 + \alpha_1 S + \alpha_2 I}$ with
$\alpha_1 \in \{0, 0.01, 0.1 \}$
and $\alpha_2 \in \{0, 10^4, 10^5, 10^6 \}$.
Time $t = 0$ corresponds to the year 1986.}
\label{fig:Beddington}
\end{figure}

For the specific non-linear incidence function
$\dfrac{\beta S}{1+\alpha_1S+\alpha_2I+\alpha_3SI}$
\cite{mahrouf1,Hattaf2,HattafD2,Lotfi,Maziane1},
see Figure~\ref{fig:specificNonlinear}.
\begin{figure}[!ht]
\advance\leftskip1.6cm
\includegraphics[width=0.7\textwidth]{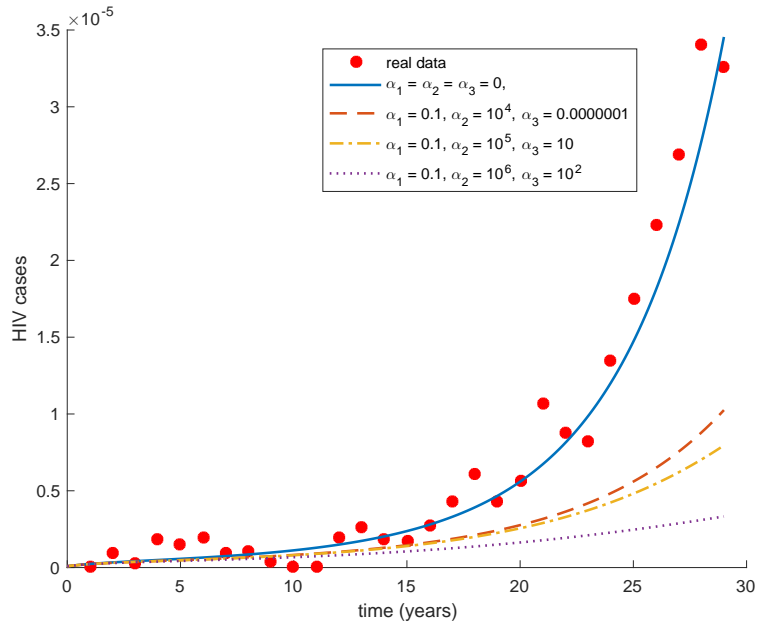}
\caption{Specific non-linear incidence function
$\dfrac{\beta S}{1+\alpha_1S+\alpha_2I+\alpha_3SI}$ with
$\alpha_1 \in \{0, 0.1 \}$, $\alpha_2 \in \{0, 10^4, 10^5, 10^6 \}$
and $\alpha_3 \in \{0, 10^{-6}, 10, 10^2 \}$.
Time $t = 0$ corresponds to the year 1986.}
\label{fig:specificNonlinear}
\end{figure}

In Tables~\ref{Table:R0} and \ref{Table:R0:alpha1},
we compute the basic reproduction number $R_0$
for each incidence function proposed in Table~\ref{Table:1}
with the parameter values from Table~\ref{values1}.
\begin{table}[!ht]
\centering
\caption{Basic reproduction number for some special incidence functions.}
\label{Table:R0}
{\renewcommand{\arraystretch}{2}	
\begin{tabular}{l c c}
\hline\hline
Incidence functions &   ${f(S,I)}$ &  $R_0$ \\
\hline\hline
Bilinear  & $\beta S$ & $7.5340$\\[0.15cm]
\hline
Saturated I & $\dfrac{\beta S}{1+\alpha_1S}$
&  $\dfrac{0.0031}{0.0009 \, \alpha_1 + 0.0004}$\\[0.15cm]
\hline
Saturated II &  $\dfrac{\beta S}{1+\alpha_2I}$ &  $7.5340$ \\[0.15cm]
\hline
Beddington--DeAngelis & $\dfrac{\beta S }{1+\alpha_1S+\alpha_2I}$
& $\dfrac{0.0031}{0.0009 \, \alpha_1 + 0.0004}$ \\[0.15cm]
\hline
Crowley--Martin & $\dfrac{\beta S}{1+\alpha_1S+\alpha_2I+\alpha_1 \alpha_2SI}$
& $\dfrac{0.0031}{0.0009 \, \alpha_1 + 0.0004}$\\[0.15cm]
\hline
Specific non-linear  & $\dfrac{\beta S}{1+\alpha_1S+\alpha_2I+\alpha_3SI}$
& $\dfrac{0.0031}{0.0009 \, \alpha_1 + 0.0004}$\\[0.15cm]
\hline
Hattaf--Yousfi & $\dfrac{\beta S}{\alpha_0+\alpha_1S+\alpha_2I+\alpha_3SI}$
& $\dfrac{0.0031}{0.0004 \, \alpha_0 + 0.0009 \, \alpha_1}$ \\[0.15cm]
\hline\hline
\end{tabular}}
\end{table}
\begin{table}[!ht]
\centering
\caption{Basic reproduction number for different values of $\alpha_1$
for the incidence function Saturated I, Beddington--DeAngelis,
Crowley--Martin and Specific non-linear.}
\label{Table:R0:alpha1}
{\renewcommand{\arraystretch}{2}	
\begin{tabular}{l c c c}
\hline\hline
$\alpha_1$ &   $0.01$ &  $0.1$ & $0.2$ \\
\hline
$R_0$ & $7.3725$ &  $6.1804$ & $5.2392$ \\ [0.15cm]
\hline
\hline
\end{tabular}}
\end{table}

In Table~\ref{table:sensitivity:index}, we present the sensitivity index
of parameters $\beta$, $\phi$, $\rho$, $\alpha$ and $\omega$,
computed for the parameter values given in Table~\ref{values1}.
\begin{table}[!ht]
\centering
\caption{Sensitivity index of $R_0$ for parameter values
given in Table~\ref{values1} for the bilinear incidence
function $f(S, I) = \beta S$.}
\label{table:sensitivity:index}
{\renewcommand{\arraystretch}{2}
\begin{tabular}{l l }
\hline \hline
{\small{Parameter}} & {\small{Sensitivity index}}\\
\hline
{\small{$\beta$}} & {\small{+1}} \\
{\small{$\phi$}} & {\small{$-0.5947$}}  \\
{\small{$\rho$}} & {\small{$-0.3437$}}\\
{\small{$\alpha$}} &  {\small{$+0.0844$}} \\
{\small{$\omega$}} &  {\small{$+0.5170$}} \\
\hline \hline
\end{tabular}}
\end{table}


\section{Conclusion}
\label{sec:06}

In this paper we considered a varying population size in a homogeneously
mixed population described by a non-linear mathematical model for HIV/AIDS
transmission. We have studied the global dynamics of the model with a general
incidence function. This general incidence function depends on the
susceptible and HIV-infected individuals with no AIDS symptoms, which are
not under ART treatment and therefore transmit the HIV virus. We have shown that
the model can describe very well the reality given by the data of HIV/AIDS
infection in Morocco from 1986 to 2015, when the incidence function
is carefully chosen. The sensitivity index results of the basic reproduction
number $R_0$ with respect to the parameters show that the effective contact
rate $\beta$ is the most sensitive parameter, that is, a small perturbation
in the effective contact rate value leads to relevant quantitative changes,
since an increase (decrease) of $\beta$ by a given percentage increases
(decreases) $R_0$ by that same percentage. For this reason, the estimation
of $\beta$ should be carefully done and targeted by intervention strategies
to engage the UNAIDS worldwide goal of ending the AIDS epidemic by 2030.


\section*{Acknowledgments}

The authors would like to express their gratitude to Dr. Karkouri,
Professor at the Faculty of Medicine and Pharmacy,
Ibn Rochd University Hospital, Morocco, for her help
to access the real data from Morocco.
Silva and Torres were supported by FCT through CIDMA,
project UID/MAT/04106/2019, and TOCCATA,
project PTDC/EEI-AUT/2933/2014,
funded by FEDER and COMPETE 2020.
Silva is also grateful to the support of FCT
through the post-doc fellowship SFRH/BPD/72061/2010.
The authors would like to thank also two
anonymous reviewers, for their critical remarks and
precious suggestions, which helped them to improve
the quality and clarity of the manuscript.


\bigskip



\end{document}